\documentclass[12pt]{article}
\usepackage{graphicx,amssymb,amsmath,amsthm,cite}

\setlength\unitlength{1cm}
\setlength\topmargin{-1cm}
\setlength\oddsidemargin{-0.2in}
\setlength\textwidth{17cm}
\setlength\textheight{24cm}


\newtheorem{proposition}{Proposition}
\newtheorem{property}{Property}
\newtheorem{remark}{Remark}

\def\be{\begin{equation}}
\def\ee{\end{equation}}
\def\ben{\begin{eqnarray}}
\def\een{\end{eqnarray}}

\newcommand{\la}{\langle}
\newcommand{\ra}{\rangle}

\newcommand{\til}{\tilde}

\newcommand{\kfV}{|f_{\SV} \ra}

\newcommand{\kfWC}{|f_{\SWC} \ra}

\newcommand{\kvi}{|v_i\ra}
\newcommand{\bvi}{\la v_i |}
\newcommand{\kwi}{|w_i\ra}
\newcommand{\bwi}{\la w_i |}
\newcommand{\bui}{\la u_i |}
\newcommand{\kui}{|u_i\ra}

\newcommand{\wt}{{w}}

\newcommand{\wtkp}{{w}^{k+1}}
\newcommand{\wtik}{\wt_i^k}
\newcommand{\cik}{c_i^k}
\newcommand{\wtikp}{\wt_i^{k+1}}

\def\kj{k \setminus j}
\def\rj{r \setminus j}

\def\op{\hat{P}}

\def\S{{\cal{S}}}
\def\SS{{\cal{S}}}

\def\SH{{\cal{H}}}
\def\SV{{\cal{V}}}
\def\SW{{\cal{W}}}

\def\SVKF{{\cal{V}}_K}

\def\SWKP{{\cal{W}}_{k+1}}
\def\SVKJ{{\cal{V}}_{\kj}}
\def\SWKJ{{\cal{W}}_{\kj}}
\def\SWKrJ{{\cal{W}}_{\rj}}
\def\SW{{\cal{W}}}
\def\SWK{{\cal{W}}_k}
\def\SWKr{{\cal{W}}_r}
\def\SWC{{\cal{W}^\bot}}

\def\EVW{\hat{E}_{\SV \SWC}}

\def\EVKW{\hat{E}_{\SV_k \SWC}}

\newcommand{\Spann}{\text{span}}

\def\C{\mathbb{C}}
\def\emptyy{\{0\}}

\newcommand{\ut}{w}

\newcommand{\utik}{w_i^k}

\title{Measurements design and phenomena discrimination}

\author{Laura Rebollo-Neira\\
Mathematics\\
Aston University\\ 
Birmingham, B4 7ET, UK}

\begin{document}
\maketitle

\begin{abstract}
The construction of measurements suitable for discriminating 
signal components produced by phenomena of different nature is 
considered. The required measurements should be capable 
of cancelling out those signal components which are to be ignored 
when focussing on a phenomenon of interest. 
Under the hypothesis that the subspaces hosting the 
signal components produced by each phenomenon are complementary,  
their discrimination is accomplished by measurements giving rise to 
the appropriate oblique projector operator. 
The subspace onto which the operator should 
project is selected by non-linear 
techniques in line with adaptive pursuit strategies.
\vspace{1cm}

PACS numbers: {02.30.Mv, 02.60.Gf, 02.30.Sa, 83.85.Ns, 95.75.Tv, 95.75.Pq, 95.75.Fg}

\end{abstract}

\section{Introduction}
The word {\em{signal}} is frequently used to refer to a physical carrier  
convening information about some phenomenon. We adopt 
such terminology and further refer to the process of 
transforming a signal into a number (within the corresponding units) 
as a  {\em{measurement}}.
An appropriate mathematical setting for this description is to 
consider that a signal is an element of some vector space 
and a measurement a
functional transforming the vector into a scalar. In this effort we  
discuss the design of measurements in relation to the following problem:
Assume that a signal $f$, represented 
as an element of an inner product space $\SH$, arises 
by the superposition of two components, $f_1$ and $f_2$, 
each of which is produced by a particular phenomenon   
and such that $f_1 \in \S_1$ and  $f_2 \in \S_2$, 
where $\S_1$ and $\S_2$ are disjoint subspaces of $\SH$, i.e. 
$\S_1 \cap \S_2 = \emptyy$. This condition implies that the 
superposition $f=f_1 + f_2$ is unique. The matter to be 
addressed here concerns the construction of the appropriate measurements  
allowing us to discriminate the component, say $f_1$, from the 
available signal $f$ and the knowledge of $\S_1$ and $\S_2$. Under 
the condition $\S_1 \cap \S_2 = \emptyy$  the problem has 
a straightforward `theoretical' 
solution, since the component $f_1$ can be  extracted from $f$  by an 
oblique projection onto $\S_1$ and along $\S_2$ \cite{BS94,Reb07a}. 
Unfortunately,  
even when theoretically the condition $\S_1 \cap \S_2 = \emptyy$ is 
satisfied, if the subspaces $\S_1$ and $\S_2$ are not well separated, 
the construction of the corresponding oblique projector 
becomes ill posed. Consequently, the signal splitting can not be achieved 
by numerical calculations in finite precision arithmetics. 
This is the situation we are concerned with. {\em{We assume that the given
subspaces $\S_1$ and $\S_2$ are `theoretically' disjoint,  but
close enough to yield an ill posed problem}}.

Our proposal for the numerical realization of the phenomenon discrimination 
is focussed on the search of a subspace of the given
$\S_1$, where a class of signals is considered to lie. It will be 
assumed throughout the paper that the class of signals to be 
considered is $K$-sparse in a spanning set for
$\S_1$. By this we mean that given a spanning set for $\S_1$, 
the corresponding linear superposition of a signal  
 has at most $K$ nonzero coefficients. The $K$-value should 
be less than or equal to the dimension of the subspace  of  
$\S_1$ such that the construction of measurement vectors 
giving  rise to an oblique projection onto 
itself, and along $\S_2$, is well conditioned. This assumption 
is quite realistic, considering that in practice there is often  a
lack of complete knowledge on the actual subspace 
$\S_1$ and to be on the safe side one may overestimate it.  However, 
the assumption does not make the problem much easier to solve. Indeed, 
the problem of subspace selection is in  general a combinatorial 
problem,  whereby an exhaustive search of possibilities 
is in general intractable. 
The approach we propose in this Communication evolves by step wise optimal 
selection and is in line with the adaptive greedy approximation termed
Matching Pursuit.
 Such a technique, which appeared first in 
 the statistic literature \cite{Jon87},  
has been extended 
in the area of signal processing to several greedy 
strategies \cite{MZ93,DMA97,RL02,ARS04,Tro04,AR06} 
being currently of 
 assistance to a range of disciplines, including physics \cite{WB03,WB04,Jie08}.
 In particular, we revise and extend the Oblique Matching 
Pursuit (OBMP) approach which has been recently proposed in 
relation the above described problem of signal discrimination 
\cite{Reb07b}. 

The paper is organized as follows: In Section \ref{sec2} we 
introduce the mathematical setting for signal representation to 
be adopted here and in Section \ref{sec3} with discuss
the construction of oblique projectors. Section 
\ref{sec4} highlights the importance of
the search for sparse representations in the construction of oblique 
projectors for phenomenon discrimination. The proposed strategy 
is discussed in Section \ref{sec5} and illustrated in Section 
\ref{sec6} by two numerical simulations: 1)The 
cancellation of impulsive noise from the register of a system of 
harmonic oscillators and 2) the
separation of a spectrum from blackbody radiation background.
The conclusions are presented in Section \ref{sec7}. 
                                        
\section{Mathematical framework}
\label{sec2}
Regardless of the informational content of a signal,  we deal 
with it mathematically by considering it as an 
element of an inner product space $\SH$. Thus, adopting 
Dirac's  notation, we represent a signal $f$ as a ket $|f \ra$ and 
the corresponding dual as a bar $ \la f |$. Accordingly, 
the square norm $|||f \ra ||^2$ is induced by the 
inner product that we indicate as $\la f |f \ra$.
For our present purpose we further assume 
that all the signals of interest belong to some finite 
dimensional subspace $\SV \in \SH$ spanned by a 
finite set $\{\kvi \in \SH\}_{i=1}^M$. Consequently,   
for every signal $\kfV \in \SV$ there exists a set of numbers 
$\{c_i\}_{i=1}^M$ which allow us to 
express the signal as the linear superposition 
\be
\label{mod}
\kfV = \sum_{i=1}^M c_i |v_i\ra.
\ee
In the jargon of signal processing the above expansion is 
called {\em atomic decomposition} and the vectors in the 
decomposition are called `atoms' \cite{Mal99}. 
In applications where an economical signal representation is 
important, the goal is to construct decompositions 
involving as few terms as possible. For this  end the atoms 
are selected from a large and,  in general redundant, set 
called a {\em dictionary}. If the number of $M$-terms 
in \eqref{mod} is satisfactorily small in relation with the particular 
application, the decomposition is said to be {\em{sparse.}}

Even when we think of a signal as an abstract object   
in an inner product space, for processing tasks we need a numerical    
representation of such an object. The process of 
transforming the signal into a number is refereed 
to as  a {\em {measurement}} or {\em {sampling}}.
Since we restrict considerations to linear measurements,
we represent them by {\em {linear functionals}}. 
Thus, making use of Riesz' theorem \cite{RS80} we can  express a linear 
measurement as
$m= \la  w | f \ra $ for some $|w \ra\in \SH.$  Hence, 
considering $M$ measurements $m_i,\, i=1,\ldots,M$,
each of which is obtained by a {\em{measurement vector}} $\kwi$
we have a numerical representation of the ket $ |f \ra$ as given by
\be
\label{mset}
m_i= \la  w_i | f \ra,\quad i=1,\ldots,M.
\ee
The representation of measures  as in   
(\ref{mset}), which have been used in physics for many years, 
has started to become popular within other disciplines
through the theory of Compressed Sensing 
\cite{Don06,CR06,Ct06,Bar07,CSWebpage}.  

The problem of reconstructing the whole information content 
of a signal from a numerical 
representation has been extensively studied for the last 
thirty years from a number of different 
points of view. The diversity of available approaches 
is specially helpful
when the reconstruction is to be achieved on the basis of
incomplete information. In particular, the  above mentioned 
theory of Compressed Sensing has produced strong theoretical 
results with regard to the recovery of a signal, assumed to be 
sparse in some orthonormal basis, from a number of 
non adaptive measurements which can be significantly less than the dimension of 
the signal subspace \cite{Don06,CR06,Ct06,Bar07,CSWebpage}. Here 
we focus on the particular problem involving 
the reconstruction of a signal 
from a set of linear measurements as given in (\ref{mset}), 
with no further numerical calculations other 
than using these numbers in the expansion (\ref{mod}). 
In other words, {\em we wish to use the measurements as coefficients in the 
linear combination (\ref{mod}}). The question then arises as 
to which are the conditions to be requested for the measurement 
vectors $\kwi$ to produce the corresponding 
numbers allowing  the reconstruction of the signal $ |f \ra$ as 
\be
\label{mod2}
\kfV = \sum_{i=1}^M |v_i\ra \bwi f\ra. 
\ee
A major consequence of working under the assumption that 
the signal of interest, $\kfV$, belongs  
to a {\em finite} dimensional subspace, $\SV$, is {\em the lack of 
uniqueness of the measurement vectors $\{\kwi\}_{i=1}^M$, 
even when the spanning set $\{\kvi\}_{i=1}^M$ is linearly independent}.
This statement appears clearly from the following observation. 

Let us denote $\sum_{i=1}^M |v_i\ra \bwi$ as an operator, 
$\hat{E}$, so as to recast \eqref{mod2} in the fashion
\be
\label{mod2c}
\kfV = \hat{E} |f\ra.
\ee
This equation tells us that the measurement vectors 
$\{\kwi\}_{i=1}^M$ should be such that
operator $\hat{E}$ is a projector onto $\SV$. Indeed, 
the operator $\hat{E}$ is a projector if and only if it is
{\em{idempotent}} \cite{RS80,Gal04}, i.e.,
$\hat{E}^2= \hat{E}.$
Consequently, as discussed below, the projection is onto
the range of the
operator, ${\cal{R}}(\hat{E})$,  and along its null space
${\cal{N}}(\hat{E})$. 

Denoting by ${\cal{D}}$ the domain of $\hat{E}$ we recall that
$${\cal{R}}(\hat{E}) =\{ |f \ra,\,\text{such that}\,
 |f \ra= \hat{E}|g\ra,\, \text{for some} |g\ra \in {\cal{D}}\}.$$
Thus, for $\hat{E}$ an idempotent operator 
and for $| f \ra  \in {\cal{R}}(\hat{E})$, 
we have
$\hat{E} | f \ra =\hat{E}^2 |g\ra =\hat{E}|g\ra= | f \ra.$
This implies
that $\hat{E}$ behaves like the identity operator for all
$| f \ra \in {\cal{R}}(\hat{E})$, regardless of ${\cal{N}}(E)$, which is
defined as
$${\cal{N}}(E)= \{|g\ra,\,\text{such that}\,  \hat{E}|g\ra=0,\, |g\ra\in
{\cal{D}}\}.$$
It is now clear that to reconstruct a signal $|f \ra \in \SV$ by means of
(\ref{mod2}) the measurement vectors $\{\kwi\}_{i=1}^M$
should give rise to an operator $\sum_{i=1}^M |v_i\ra \bwi$, which must be 
a projector onto $\SV$.
It is appropriate to point out that the required operator 
is {\em not unique} (even if 
the spanning set $\{\kvi\}_{i=1}^M$ is linearly independent) 
because there exist
many projectors onto $\SV$ having different ${\cal{N}}(\hat{E})$.
Consequently, for reconstructing
signals in the range of the projector its null space
can be chosen arbitrarily. Nevertheless, the
null space, and therefore the particular 
measurement vectors, become crucial when the
projector is to be applied on signals outside its range. It 
follows then that the measurement vectors 
  $\{\kwi\}_{i=1}^M$  can be tailored for a particular 
purpose. Such a degree of freedom will be 
indicated hereforth by using two subscripts for representing a projector. 
We adopt the notation $\hat{E}_{\SV \SWC}$ to indicate a 
projector onto the subspace $\SV$ and along the subspace $\SWC$. 
The particular case
$\hat{E}_{\SV \SV^\bot}$, where $\SV^\bot$ is orthogonal
to $\SV$, corresponds to an {\em {orthogonal projector}}
and we use the special notation $\hat{P}_{\SV}$ to denote  such a
projector. The orthogonal projector is 
popular in approximation techniques because if a signal 
$| f \ra $ is to be approximated by a signal $|f_{\SV}\ra \in \SV$ the
choice $\kfV = \hat{P}_{\SV} | f \ra$ is known to yield the 
unique signal in $\SV$ minimizing the distance 
$||| f \ra - |f_{\SV}\ra||$. 
However, as discussed below, if one is interested in discriminating 
signal components produced by phenomena of different nature an alternative 
 selection of the subspace $\SWC$ is required. 
When
$\SWC$ is not orthogonal to ${\SV}$ the projector $\hat{E}_{\SV \SWC}$
is referred to as an {\em oblique projector}.  

Let as assume for instance that a signal $ |f \ra$ is the superposition of
two signals, $|f \ra=\kfV+ \kfWC$, each component being produced by a different 
 phenomenon we wish to discriminate. Let us assume further that we can 
model the subspaces $\SV$ and $\SWC$ hosting each signal component and 
such subspaces are disjoint, i.e. 
$\SV \cap \SWC = \emptyy$. Thus we can obtain  $\kfV$   from
$| f \ra$, by an oblique projector onto $\SV$ and along $\SWC$. The
projector will map to zero the component $\kfWC$ to produce
$$\kfV= \hat{E}_{\SV \SWC} |f \ra.$$
In the next section we discuss the construction of measurement vectors 
$\{\kwi\}_{i=1}^M$ giving rise to the desired projector. 
\section{Constructing measurement vectors for discrimination of 
signal components}
\label{sec3}
Given two disjoint subspaces $\SV $ and $\SWC$, 
in order to provide a prescription for
constructing the projector $\hat{E}_{\SV \SWC}$ one can proceed as follows. 
Firstly we define $\SS$ as the direct sum of $\SV$ and $\SWC$,
which we express as
 $$\SS= \SV \oplus \SWC.$$
Let $\SW= (\SWC)^\bot$ be the orthogonal complement of $\SWC$ in $\SS$. Thus
we have
$\SS= \SV \oplus \SWC=\SW \oplus^\bot \SWC,$
where the operation $\oplus^\bot$ indicates the orthogonal sum, which 
refers to the direct sum of orthogonal subspaces. 

Considering that $\{\kvi\}_{i=1}^M$ is a spanning set for
$\SV$, a spanning set for $\SW$ is obtained as
\be
\label{ku}
\kui=\kvi - \op_{\SWC} \kvi = \op_{\SW}\kvi,\,i=1,\ldots,M.
\ee
Denoting as $\{| i \ra \}_{i=1}^M$ the standard orthonormal basis for
$\C^M$, we
define the operators $\hat{V}: \C^M \to \SV$ and $\hat{U}:\C^M \to \SW$
as
$$\hat{V}=\sum_{i=1}^M \kvi  \la i|,
\;\;\;\;\;\;\;\;\;\;\;\;\;\;
\hat{U}=\sum_{i=1}^M  \kui   \la i|.$$
Thus the adjoint operators $\hat{U}^\ast$ and
$\hat{V} ^\ast$ are expressed as
$$\hat{V}^\ast=\sum_{i=1}^M  |i \ra \bvi,
\;\;\;\;\;\;\;\;\;\;\;\;\;\
\hat{U}^\ast=\sum_{i=1}^M  |i \ra  \bui.$$
Notice that
$\op_{\SW} \hat{V}= \hat{U}$ and
$\hat{U}^\ast \op_{\SW} = \hat{U}^\ast$
hence, $\hat{G}: \C^M \to \C^M $ defined as:
$$\hat{G}=\hat{U}^\ast  \hat{V}= \hat{U}^\ast  \hat{U}$$ 
is a self-adjoint operator. Its matrix representation being 
given by the elements $\la i| \hat{G} | j\ra= \la u_i |v_j\ra= \la  u_i |u_j\ra, \, i,j=1,\dots,M.$
\begin{remark}It is appropriate to stress that
\begin{itemize}
\item
Operators $\hat{V}$ and $\hat{U}$ are given in terms of 
spanning sets for the spaces 
$\SV$ and $\SW$, respectively, and  {\em any} such spanning set 
can be used. 
\item
The condition $\SV \cap \SWC = \emptyy$ implies that the dimension of $\SV$
is equal to the dimension of $\SW$. Hence, provided that the 
spanning set $\{\kvi\}_{i=1}^M$ is linearly independent, operator $\hat{G}$
has an inverse. Nevertheless, the independence of $\{\kvi\}_{i=1}^M$ 
is not a requirement 
and therefore an inverse for $\hat{G}$ need not exist. For  
the sake of generality we shall use $\hat{G}^\dagger$,
which indicates a pseudo-inverse of $\hat{G}$.
\end{itemize}
\end{remark}
The oblique projector operator onto $\SV$ and along $\SWC$ is 
given as \cite{Eld03}
\be
\label{yoni}
\EVW= \hat{V} \hat{G}^\dagger \hat{U}^\ast,
\ee 
or, equivalently, as
\be 
\label{obli2}
\EVW= \sum_{i=1}^M \kvi \bwi,
\ee
with
\be
\kwi=  \hat U \hat{G}^\dagger |i \ra= \sum_{j=1}^M = 
  | u_j \ra \la j| \hat{G}^\dagger |i \ra. 
\ee
It is actually  straightforward to verify that $\EVW$ given in 
(\ref{obli2}) satisfies the required properties. Namely, 
$\EVW^2 = \EVW,$  
$\EVW \kfV = \kfV$ for all $\kfV \in \SV$, and 
$\EVW | g  \ra = 0$ or all $ | g \ra  \in \SWC.$
\begin{remark}
The construction of 
an oblique projector is similar to that of
an orthogonal one. The difference being that in
general the subspaces
$\Spann \{\kvi\}_{i=1}^M =\SV \quad \text{and}\quad \Spann\{\kwi\}_{i=1}^M= \SW$
are different.
For the special case $\{\kvi\}_{i=1}^M=\{\kui\}_{i=1}^M$ we 
have $\Spann\{\kwi\}_{i=1}^M= \Spann \{\kvi\}_{i=1}^M=\SV$, 
thereby the projector is self adjoint and, 
consequently, an orthogonal projector onto $\SV$ along $\SV^\bot$.
\end{remark}
We are already in a position to 
extract the component $\kfV$ from $| f \ra$ by the 
simple operation
$\kfV = \hat{V} \hat{G}^\dagger \hat{U}^\ast | f \ra.$
However, the correct discrimination of the signal components  
is successful provided that the subspaces $\SV$ and 
$\SWC$ are well separated. Unfortunately, this is not
always the case and the construction of the necessary projector
may generate an ill posed problem.  In spite of the
fact that `theoretically' $\SV \cap \SWC = \emptyy$, numerical errors,
due to the existence of small eigenvalues  values of the operator $\hat{G}$, 
may cause the
failure to find the unique signal splitting that theoretically
one should expect.
Nevertheless, the correct separation is still possible, 
provided that the signal $\kfV$  
admits a sparse representation in some spanning set for $\SV$. In order words, 
one could succeed in  extracting $\kfV$, provided that it is 
well represented in a subspace $\SV_K \subset \SV$ inducing a 
subspace $\SW_K  \subset \SW$ (satisfying $\SV_K  + \SWC= \SW_K \oplus^\perp 
\SWC$) where the computation of the measurement vectors is well posed. 
If this is the case, the problem of designing measurement vectors for 
discriminating signal components can be addressed as the problem of 
finding the subspace $\SV_K$ where $\kfV$ is well represented. 
Unfortunately
the search for the subspace $\SV_K$ is in general intractable. Indeed, let us 
assume that $\{|v_i\ra\}_{i=1}^M$ is an spanning set for $\SV$ and $\SV_K$ is 
spanned by $K$ elements of such a set. Even possessing this knowledge,  
the problem of finding the right subspace by exhaustive search 
would be a combinatorial problem: 
Out of a set of cardinality $M$ there exist $\tbinom{M}{K}$
possible subsets of cardinality $K$. As already mentioned,  
 we shall not look for the sparsest representation 
 but make the search of the appropriate subspace 
 tractable by means of recursive greedy pursuit strategies, 
 which are only step wise optimal. Before discussing our approach  
 some considerations are in order. 
\section{Getting ready for the search} 
\label{sec4}
In this section we highlight some properties that will be of 
assistance in the next 
section, where we will present our strategy for the search 
of the sparse representation achieving the desired signal discrimination.
The goal is to avoid the computation of the measurement vectors 
in the whole subspace.
Instead, we strive to find the subspace $\SV_K \subset \SV$, where the
signal component one wants to extract from a signal $| f \ra$
is assumed to lie.  
We work under the hypothesis that the subspace $\SWC$ is given and
fixed. Furthermore, $\SV \cap \SWC = \emptyy$,
which implies that there exists a unique solution for the signal splitting.
The problem we need to address arises from the fact that, if 
the subspaces $\SV$ and $\SWC$ are not well separated,  
the numerical calculation of the measurement vectors 
is not accurate (due to the numerical operations being carried 
out in finite precision arithmetic). As a consequence, the  
representation of the corresponding projector fails to produce the 
correct signals separation. This effect is very much magnified 
if the data are affected by errors no matter how insignificant 
those errors are.

Assuming that we are able to accurately compute in finite 
precision arithmetic $r$ measurement vectors, we could attempt 
to single out a signal belonging to a subspace spanned by at most $r$ 
vectors (i.e. we could attempt to separate from $| f \ra$ 
 a signal expressible as in (\ref{mod2}) but at most with 
$r$ nonzero coefficients). However, as discussed above, 
even possessing this knowledge about 
the sought signal the problem of finding the right subspace by 
exhaustive search is not 
affordable. Hence, an adaptive greedy strategy for the
subspace selection, given a signal, was advanced in \cite{Reb07b}. 
Before revising and extending that strategy we need to recall two 
relevant properties of oblique projectors.
\begin{property}
\label{pro1}
The oblique projector $\EVW$ satisfies
$\op_{\SW} \EVW= \op_{\SW}$.
\end{property}
\begin{proof} 
It readily follows by applying $\op_{\SW}$ on both sides of 
\eqref{yoni} or \eqref{obli2}. 
Since $ \op_{\SW} | v_i \ra = \kui$ and 
$\la u_i,v_j\ra = \la u_i, u_j\ra$, one has
\be
\label{pw}
\op_{\SW} \EVW= \sum_{i=1}^M  | u_i \ra \la w_i |= 
\hat{U} \hat{G}^\dagger \hat{U}^\ast=\op_{\SW}.
\ee
\end{proof}
\begin{property} 
\label{orcom}
Given a signal $ | f \ra$ in 
$\SV + \SWC= \SW \oplus\SWC$, the only vector $ |g \ra\in \SV$
satisfying 
\be
\label{co2}
\op_{\SW} | f \ra = \op_{\SW} |g\ra
\ee
is $| g \ra= \EVW  | f \ra$.
\end{property}
\begin{proof}
If $|g\ra = \EVW  | f \ra $ \eqref{co2} trivially follows from Property 
\ref{pro1}.
Let us assume now that there exists $|g\ra \in \SV$ such that  
\eqref{co2} holds.
Then $\op_{\SW} (| f \ra -| g \ra)=0$, i.e.,
$( |f \ra- | g \ra )\in \SWC$. Hence $ \EVW( | f \ra- | g \ra)=0$ and,
since $|g  \ra \in \SV$,  this implies that $\EVW  | f \ra = | g \ra $.
\end{proof}
Let us suppose that $\SV_k= \Spann\{|v_i\ra\}_{i=1}^k$ is given and 
the spanning set is linearly independent. Assuming that
$\SV_k \cap \SWC = \emptyy$ we guarantee that the set of 
vectors $\{|u_i\ra \}_{i=1}^k$, with $|u_i\ra$ given in 
\eqref{ku}, is also linearly independent.  Consequently the dimension 
of $\SV_k$ is equal to the dimension of 
$\SW_k= \Spann \{|u_i\ra\}_{i=1}^k= \Spann{\{|\wtik \ra\}_{i=1}^k}$. 
We use now a superscript $k$ to indicate that the measurement 
vectors $\{|\wtik\ra\}_{i=1}^k$ span $\SW_k$. 
Hence these vectors give rise to  
the oblique projection of a signal $ | f \ra$, onto $\SV_k$ and
along $\SWC$, as given by:
\be
\label{ato}
\EVKW  | f \ra =\sum_{i=1}^k |v_i \ra \la  \wtik | f \ra = 
\sum_{i=1}^k c_i^k \kvi.
\ee
It is clear from (\ref{ato}) that if the atoms in the  atomic 
decomposition were to be changed (or some  
atoms were added to or deleted from the decomposition) 
the measurement vectors $ \{|\wtik\ra\}_{i=1}^k$, and consequently the coefficients $\{\cik \}_{i=1}^k$
in (\ref{ato}), would need to be modified. The recursive equations 
below provide an effective way of implementing the task.\\ 

{\em{Forward/backward adapting of measurement vectors}}\\

Starting with $|\wt_1^1\ra=\frac{|u_1\ra}{|||u_1\ra||^2}$, 
and $|u_1\ra$ as in \eqref{ku},
the measurement vectors $ \{|\wt_{i}^{k+1}\ra\}_{i=1}^{k+1}$ can be recursively 
constructed from $\{|\wtik\ra \}_{i=1}^{k}$ as follows \cite{Reb07a}:
\ben
\label{eq}
|\wtikp \ra &=& | \wtik \ra  - |\wt_{k+1}^{k+1}\ra \la u_{k+1} | \wtik \ra,
\quad i=1,\ldots,k \label{wik}\\
|\wt_{k+1}^{k+1}\ra&=& \frac{|\gamma_{k+1}\ra}{|||\gamma_{k+1}\ra||^2},\quad
|\gamma_{k+1}\ra= |u_{k+1}\ra-\op_{\SWK}|u_{k+1}\ra, \label{wkkp}
\een
where $\op_{\SWK}$ is the orthogonal projector onto $\SWK=
\Spann\{| u_i \ra \}_{i=1}^k$. 
We note that, since $|u_{k+1}\ra = \op_{\SW} |v_{k+1} \ra$  and 
$\op_{\SW} |\wtik \ra=  | \wtik \ra $, 
\eqref{wik} can also be written as
\ben
| \wtikp \ra &=& |\wtik \ra  -  |\wt_{k+1}^{k+1} \ra  \la v_{k+1}| \wtik \ra,
\quad i=1,\ldots,k. \label{wik2}
\een
It follows from the above equations that when incorporating a linearly
independent atom 
$|v_{k+1}\ra$ in the atomic decomposition \eqref{ato}, the coefficients 
can be conveniently modified according to the recursive equations
\ben
\label{recf}
c_{k+1}^{k+1}&=& \la  \wt_{k+1}^{k+1} | f \ra, \\
c_i^{k+1}&= &\la \wtikp | f \ra= c_i^{k} - c_{k+1}^{k+1} \la \wtik| v_{k+1} \ra,
\quad i=1,\ldots,k. \label{wik22}
\een
Conversely, considering that the atom, $|v_j\ra$ say, is to be removed 
from the atomic decomposition \eqref{ato}, and 
denoting the corresponding subspaces 
$\SVKJ$ and $\SWKJ$, in  
order to span $\SWKJ$
the measurement vectors $ \{|\ut_i^{\kj}\ra \}_{i=1, i\ne j}^k$ are modified 
according to the equation \cite{Reb07a}
\be
\label{duba}
|\ut_i^{\kj}\ra=|\utik\ra-\frac{|\ut_j^k\ra \la \ut_j^k |\utik\ra}{|| |\ut_j^k\ra ||^2},\quad
i=1,\ldots, j-1, j+1, \ldots, k.
\ee
Consequently, the coefficients in \eqref{ato} should be changed to
\be
\label{coba}
c_i^{\kj}=\cik-\frac{\la \utik | \ut_j^k \ra c_j^k}{|||\ut_j^k\ra||^2 },\quad
i=1,\ldots, j-1, j+1, \ldots,k.
\ee
\section{Adaptive pursuit strategy for subspace selection} 
\label{sec5}
Given a signal $| f \ra$, we aim at finding the
subspace $\SV_K \subset \SV$ 
where one of the signal components lies. Let us stress once again that
the problem arises from the impossibility of correctly computing the 
measurement vectors spanning the whole subspace $\SW$.  
Moreover, we have to face the fact that the corresponding 
signal component we want to represent {\em is not 
available}. What we know is that the available  signal, 
$|f \ra$, is expressible as the sum of two components 
$|f \ra = \kfV + \kfWC$ and that there exists an unknown 
subspace ${\SV_K} = \Spann \{ {|v_{\ell_{i}}\ra}\}_{i=1}^K 
\subset \SV$ 
where  $\{{\ell_{i}}\} _{i=1}^K$ is a set of $K$ unknown 
indexes such that
$ | f \ra= \kfV + \kfWC= |f_{\SVKF}\ra + \kfWC,$
with
\be
\label{fsu}
|f_{\SV}\ra= |f_ {\SVKF}\ra =\sum_{i=1}^K |v_{\ell_i}\ra \la w_i^K | f \ra.
\ee
Hence, if the set of indexes  $\{{\ell_{i}}\} _{i=1}^K$
were given, 
one could construct 
the measurement vectors $|w_i^K\ra$ in 
${\SW_K}= \Spann \{\op_{\SW} {|v_{\ell_i}}\}_{i=1}^K$ and 
obtain the component $\kfV= |f_ {\SVKF}\ra$ from \eqref{fsu}. 
Unfortunately, in the problem we are addressing  
neither the set of indexes  $\{{\ell_{i}}\} _{i=1}^K$ 
nor the component $\kfV$ are given. Nevertheless, by applying 
 $\op_{\SW}$ to  both sides of \eqref{fsu} we obtain
\be
\label{fsu2}
  |f_\SW\ra = |f_{\SW_K}\ra= \sum_{i=1}^K |u_{\ell_i}\ra \la w_i^K | f \ra.
\ee
Denoting $\hat{I}_{\SS}$ to the identity operator in $\SS$ we have
$\op_{\SW} = \hat{I}_{\SS}- \op_{\SWC}$. Thus, since the subspaces 
$\SS$ and $\SWC$ are known, we do have access to the component 
$|f_{\SW}\ra$. We can then look for the set of indexes  
$\{{\ell_{i}}\}_{i=1}^K$ 
to approximate this component as in (\ref{fsu2}). 
\begin{remark}
Notice that the measures $\la w_i^K | f \ra$
involved in \eqref{fsu} and \eqref{fsu2} are the same. Therefore,
by finding the representation of $|f_{{\SW}_K}\ra$ we have the information
which is needed to obtain $|f_{\SVKF}\ra$  from \eqref{fsu}. Let us 
stress that the need to deal with $|f_{{\SW}}\ra$ 
also introduces the bad conditioned nature of the problem we are considering. 
Certainly,  having access to the signal $|f_{\SV}\ra$ would imply that,
provided that the spanning set $\{|v_i\ra\}_{i=1}^M$ were well conditioned, 
one could find the sparse approximation  \eqref{fsu}  
without difficulty. However, even  when the conditioning of this spanning set 
is ideal (i.e. $\{|v_i\ra\}_{i=1}^M$ is an orthonormal basis for ${\SV}$) 
the fact that we  need to deal with the projection $|f_{\SW}\ra$, 
which is sparse in the set $\{|u_i\ra= \op_{\SW}|v_i\ra\}_{i=1}^M$,
introduces the difficulty we have to face. 
If the set $\{u_i\}_{i=1}^M$  were well 
conditioned, the robust signal splitting  could be obtained by a simple 
projection. The situation we are concerned with comprises the cases 
in which the whole set $\{u_i\}_{i=1}^M$ is very bad conditioned but 
the projection  \eqref{fsu} has a well conditioned representation 
in the subset $\{u_{\ell_i}\}_{i=1}^K \subset \{u_i\}_{i=1}^M$  
we aim to find.
\end{remark}
The proposed strategy for selecting the subset of atoms 
$\{|u_{\ell_i}\ra\}_{i=1}^K$ evolves by 
stepwise selection and is in 
line with the strategies in 
\cite{RP02a, RP02b, RP04}.
By fixing $\op_{\SWK}$, at iteration $k+1$ we select the index 
$\ell_{k+1}$ such that 
$||\op_{\SW} |f \ra - \op_{\SWKP}| f \ra ||^2$ is minimized. 
\begin{proposition} 
Let us denote by $J$ the set of indices $\{\ell_1,\ldots,\ell_k\}.$
Given $\SWK= \Spann\{|u_{\ell_{i}}\ra\}_{i=1}^k$, 
the index $\ell_{k+1}$ corresponding to the atom 
$|u_{\ell_{k+1}}\ra$ 
for which $||\op_{\SW}| f \ra  - \op_{\SWKP}|f\ra ||^2$ is minimal
is to be determined as
\be
\label{oomp}
 \ell_{k+1}= \arg\max\limits_{n \in J \setminus J_{k}}
\frac{|\la\gamma_n|f\ra|}{\||\gamma_n\ra\|},\, 
{\||\gamma_n\ra\|} \neq 0,
\ee
with $|\gamma_n\ra$ given in \eqref{wkkp}, and $J_{k}$ the set 
of indices that have been previously chosen to determine $\SWK$.
\end{proposition}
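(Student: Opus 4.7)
The plan is to reduce the minimization problem to a one-dimensional maximization by exploiting the orthogonal decomposition $\SWKP = \SWK \oplus^\perp \Spann\{|\gamma_n\ra\}$, which is exactly how equation \eqref{wkkp} is constructed. Since $|\gamma_n\ra = |u_n\ra - \op_{\SWK}|u_n\ra$ lies in $\SW$ and is orthogonal to $\SWK$, the orthogonal projector onto $\SWKP$ splits as
\be
\op_{\SWKP} = \op_{\SWK} + \frac{|\gamma_n\ra\la \gamma_n|}{\||\gamma_n\ra\|^2}.
\ee

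First I would observe that $\SWKP \subset \SW$, so $\op_{\SWKP}\op_{\SW} = \op_{\SWKP}$. Writing $|h\ra = \op_{\SW}|f\ra$, the quantity to minimize becomes $\||h\ra - \op_{\SWKP}|h\ra\|^2$, and Pythagoras' identity (since $\op_{\SWKP}|h\ra \in \SWKP$ and $|h\ra - \op_{\SWKP}|h\ra \perp \SWKP$) gives
\be
\||h\ra\|^2 = \|\op_{\SWKP}|h\ra\|^2 + \||h\ra - \op_{\SWKP}|h\ra\|^2.
\ee
Hence minimizing the residual is equivalent to maximizing $\|\op_{\SWKP}|h\ra\|^2$ over the choice of $n \in J \setminus J_k$.

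Next I would plug in the split of $\op_{\SWKP}$ and use orthogonality once more. Since $|\gamma_n\ra \perp \SWK$,
\be
\|\op_{\SWKP}|h\ra\|^2 = \|\op_{\SWK}|h\ra\|^2 + \frac{|\la \gamma_n | h\ra|^2}{\||\gamma_n\ra\|^2}.
\ee
The first term is independent of $n$, so the maximization reduces to maximizing $|\la\gamma_n|h\ra|/\||\gamma_n\ra\|$. Finally, because $|\gamma_n\ra \in \SW$, one has $\la \gamma_n | h\ra = \la \gamma_n | \op_\SW | f \ra = \la \gamma_n | f \ra$, which yields the selection rule \eqref{oomp}. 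The condition $\||\gamma_n\ra\| \neq 0$ is simply the requirement that $|u_n\ra \notin \SWK$, ensuring that adding the atom actually enlarges the subspace.

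The only mildly subtle point — what I would call the main obstacle to get right rather than hard — is justifying the replacement $\op_\SW |f\ra \mapsto |f\ra$ inside the inner product with $|\gamma_n\ra$; this rests on $|\gamma_n\ra$ belonging to $\SW$, which in turn needs $|u_n\ra \in \SW$ and $\SWK \subset \SW$. Both hold by construction from \eqref{ku}, so the argument closes without further technicalities.
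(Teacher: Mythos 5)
Your proof is correct and follows essentially the same route as the paper's: both rest on the orthogonal splitting $\op_{\SWKP}=\op_{\SWK}+|\gamma_n\ra\la\gamma_n|/\||\gamma_n\ra\|^2$ followed by Pythagoras, so that only the term $|\la\gamma_n|f\ra|^2/\||\gamma_n\ra\|^2$ depends on $n$. Your explicit justification of $\la\gamma_n|\op_{\SW}|f\ra=\la\gamma_n|f\ra$ is a detail the paper leaves implicit, but it is the same argument.
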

\begin{proof}
It readily follows  since $\op_{\SWKP}  |f \ra= \op_{\SWK}  |f \ra+ 
\frac{| \gamma_n \ra \la\gamma_n |f\ra}{\|\gamma_n\|^2}$ and hence
$||\op_{\SW}  | f \ra - \op_{\SWKP}  | f \ra  ||^2= 
||\op_{\SW}  | f \ra ||^2 - || \op_{\SWK}  | f  \ra||^2 -
\frac{|\la \gamma_n  | f  \ra |^2}{\|\gamma_n\|^2}   $.  Because  
$\op_{\SW}  |f \ra $ and $\op_{\SWK} | f \ra $ are
fixed, $||\op_{\SW} | f \ra -\op_{\SWKP} | f \ra ||^2$ 
is minimized if 
$\frac{|\la\gamma_n|f\ra|}{\||\gamma_n\ra\|},\, \| \gamma_n \| 
\neq 0$ is maximal over all $n\in J\setminus J_{k}$.
\end{proof}
The OBMP selection criterion  
\cite{Reb07b}  
  selects the index 
$\ell_{k+1}$ as the maximizer over $n \in J\setminus J_{k}$ of
$$\frac{|\la\gamma_{n} |f\ra|}{\||\gamma_{n}\ra\|^2},\quad ||\gamma_n||\neq 0.$$
This condition was  proposed  in \cite{Reb07b} based on 
the {\em consistency principle} \cite{UA94,Eld03}.
 Such a  principle, introduced in  \cite{UA94} and extended 
in \cite{Eld03},
states that the reconstruction of a signal should be 
self-consistent in the sense that, if the approximation is
measured with the same vectors, the same measures should be obtained.
Accordingly, the above OBMP criterion was derived in \cite{Reb07b} 
in order  to select the measurement 
vector  $|\wtkp_{k+1} \ra$
producing the maximum {\em {consistency error}} 
$\Delta= |\la \wtkp_{k+1}|f - \EVKW f \ra|$, with regard to 
a new measurement $|\wtkp_{k+1}\ra$.  However, since the measurement 
vectors are not normalized to unity, it is sensible to consider 
the consistency error relative to the corresponding 
vector norm $|||\wtkp_{k+1}\ra||$, and select the index  
so as to maximize over $k+1 \in J \setminus J_{k}$ 
the {\em{relative consistency error}} 
\be
\label{reler}
\til{\Delta}= \frac{|\la \wtkp_{k+1}| f - \EVKW f \ra|}{|||\wtkp_{k+1}\ra||}, 
\quad |||\wtkp_{k+1}\ra|| \ne 0.
\ee
In order to cancel this error, the new approximation is constructed accounting 
for the concomitant measurement vector. 
\begin{property}
The index $\ell_{k+1}$ satisfying (\ref{oomp}) 
maximizes over $k+1 \in J\setminus J_{k}$ the 
relative consistency error \eqref{reler}
\end{property}
\begin{proof}
Since for all vector $|\wtkp_{k+1}\ra$ given in 
\eqref{wkkp} $\la \wtkp_{k+1} | \EVKW =0$ and 
$|||\wtkp_{k+1}\ra||= |||\gamma_{k+1}\ra||^{-1}$ 
we have
$$\til{\Delta}= \frac{|\la \wtkp_{k+1}| f \ra |}{||\wtkp_{k+1}||}= 
\frac{|\la \gamma_{k+1} | f \ra |}{||\gamma_{k+1}||}.$$
Hence, maximization of $\til{\Delta}$ over $k+1 \in J\setminus J_{k}$ 
is equivalent to (\ref{oomp}). 
\end{proof}
It is clear at this point that the forward selection of 
indices prescribed by proposition (\ref{oomp}) is equivalent to 
selecting the indices by applying  the subset selection criterion 
introduced in \cite{RP02a} (c.f. Theorem 1) on the projected signal 
$\op_{\SW}|f\ra$ using the dictionary $\{|u_i\ra\}_{i=1}^M$. 
In the context of subspace selection for signal representation  
we have termed such a criterion Optimized Orthogonal 
Matching Pursuit (OOMP)\cite{RL02}. 

The hypothesis that the computation of more than $r$ measurement 
vectors becomes an ill posed problem enforces the OOMP
selection of indices to stop if iteration $r$ is reached.
Nevertheless, the fact that 
the signal is assumed to be $K$-sparse, with $K \le r$, does not
imply that before (or at) iteration $r$ one will always 
find the correct subspace. 
The $r$-value just indicates that it is not possible to 
continue with the forward selection, because 
the computations would become inaccurate and unstable. 
Hence, if the right solution was not yet found, one needs 
to implement a strategy accounting for the fact that 
it is not feasible to compute more than $r$ measurement vectors.  An 
adequate procedure is achieved by means of the swapping-based 
refinement to the OOMP approach introduced in 
\cite{AR06}. As discussed below, it consists of interchanging 
already selected atoms with nonselected  ones.

Consider that at iteration $r$ the correct subspace has not 
appeared yet and the selected indices are labeled by the $r$
indices $\ell_1,\ldots,\ell_r$. 
In order to choose the index of the atom that 
minimizes the norm of the residual error as passing 
from approximation $\op_{\SWKr}|f\ra$ to approximation 
$\op_{\SWKrJ}|f\ra$ 
we should fix the index of the atom to be deleted, $\ell_j$ say,
as the one for which the quantity
\be
\label{swab}
\frac{|c_i^r|}{|||w_i^r\ra||},\,i=1,\ldots,r.
\ee
is minimized \cite{Reb04,RP04,ARS04,AR06}.

The process of eliminating one atom from the atomic decomposition 
\eqref{ato} is called {\em backward step} while the  process of 
adding one atom is called {\em forward step}. 
The forward selection criterion to choose the atom to replace the 
one eliminated in the previous step is accomplished by 
 finding the index $\ell_i, i=1,\ldots,r$ for which   
 the functional 
\be
\label{swaf}
e_n= \frac{|\la\nu_n | f \ra|}{|| | \nu_n \ra ||},\quad 
\text{with}\quad |\nu_n \ra = |u_n \ra - \op_{\SWKrJ} |u_n\ra
,\quad |||\nu_n\ra||\neq 0
\ee
is maximized. 
In our framework, using (\ref{duba}), the projector 
$\op_{\SWKrJ}$ is computed as 
$$\op_{\SWKrJ} = \op_{\SWKr} - 
\frac{\la \ut_i^r | \ut_j^r\ra \la \ut_j^r |}{|||\ut_j^r\ra||^2}.$$
Since $\op_{\SWKr}$ and $|\ut_j^r\ra$ are available, the 
computation of the sequence $|\nu_n\ra$ in \eqref{swaf}
is a simple operation. 

The swapping of pairs of atoms is 
repeated until the swapping operation, if carried out, would not 
decrease the approximation error. The implementation details 
for an effective realization of this process are given in \cite{AR06}, 
and MATLAB codes are available at \cite{Webpage}.  
Since there is no guarantee that at the end of the swapping 
of pairs of atoms the correct subspace has been found, the
process can continue by increasing the number of 
atoms the swapping involves. At the second stage, in line with 
\cite{AR06a},
 we propose the swapping to be realized by the combinations 
 of two backward steps followed by 
two forward steps, provided that the interchange of the two 
atoms improves the approximation error. If at the end of the
second stage the right subspace has not yet been found, the  number of 
atoms involved in the swapping is increased up to three 
and so on. Notice that 
if the number of atoms to be interchanged reaches the value $r$ the 
whole process would repeat identically. This is avoided by 
initiating the new circle with a different initial atom. 
The above specified hypothesis ensures that the algorithm 
will stop when the correct signal splitting has been found. 
At such a stage one has
$\op_{\SW} |f\ra = \op_{\SW_r} |f \ra$ with $\SW_r$ 
spanned by the selected atoms 
$\ell_1,\ldots,\ell_r$. If the order $K$ of sparseness 
of the signal is 
less than $r$ a number of $r-K$  coefficients in the atomic 
decomposition 
$$|f_{\SV_r}\ra =\sum_{i=1}^r |v_{\ell_i}\ra \la w_i^r|f \ra = 
\sum_{i=1}^r c_i^r |v_{\ell_i}\ra$$
will have zero value.

In the realistic case where the measurements are affected by errors, 
the proposed iterative process is to be stopped when the condition 
$$||\op_{\SW} | f \ra - \op_{\SW_r} | f\ra || \le \delta$$ is reached 
(where $\delta$ should be determined by taking into account the errors of 
the data).
\section{Numerical simulations}
\label{sec6}
\subsection{Impulsive noise filtering}
We extend here the example in \cite{Reb07a} concerning the 
elimination of impulsive noise from the
register of the motion of uncoupled damped harmonic oscillators. 

The $n$-th oscillator is characterized by a
frequency of $\frac{n}{2}$ Hertz  and  
its motion,  as a function of time,
is given by the equation  
\be
\label{xn}
\la t | x_n \ra= x_n(t)= e^{-t} \cos (\pi n t).
\ee
In \cite{Reb07a} the register of system's motion  was considered to 
be the signal
\be
\label{premo}
\la t | f_{{\SV}_{100}} \ra = \sum_{n=1}^{100}
\frac{e^{-t} \cos (\pi n t)} {(1+0.7(n-75)^2)},\quad t \in [0,1]
\ee
and the impulsive noise corrupting  this signal 
was assumed to belongs to the subspace
\be
\label{swc}
\SWC= \Spann \{e^{-100000(t-0.0025j)^2},\, j= 1,\ldots,400,\,
t \in [0,1]\}.
\ee
Due to the nature of the  distribution of frequencies in 
\eqref{premo}, the corresponding oblique projector 
for filtering the 
impulsive noise from the register was recursively constructed
by incrementing the frequency $n$ one by one, 
until the projection of the noisy signal 
onto the span of the set $\{x_n(t)\}_{n=1}^k$
remained unaltered by increasing the number $k$ of 
elements in the set up to some value.
As remarked in \cite{Reb07a},  
such a procedure is not always feasible.
Let us consider for instance  that the
100 oscillators have frequencies which are not restricted 
to the range $[1,100]$  but can be any integer number in the 
$[1,405]$ interval. In this case the recursive construction of the 
projector by incrementing the frequency one by one is in 
general not possible, because the corresponding numerical calculations
become ill posed before reaching the frequency $n=405$.
However, the oblique projector, along $\SWC$ given above, 
onto a subspace spanned by only 100 functions $x_n(t),\,t \in [0,1]$
 with $n\in [1,405]$, can be accurately calculated. Thus, by applying 
the techniques of the previous sections to find the right 
frequencies of the system, the cancellation of the 
impulsive noise in the register is rendered possible. This is 
illustrated by the numerical simulation described below. 
 
The impulsive noise is simulated by randomly taking 
200 pulses in \eqref{swc}. The system's motion is  simulated  as a linear 
combination of 100 functions \eqref{xn} the frequencies of 
which are taking, randomly, from the set $[1,405]$. The 
data are assumed to be known in single precision. 
The simulation was run 50 times and in all the cases the cancellation 
of the impulsive noise was successful. The left graph of Fig.~1 
plots one of the realizations of the experiment  (motion of the  system 
plus impulsive noise vs time). The graph on the right depicts the  
result obtained by applying the proposed technique 
to the signal on the left (it coincides with the 
line representing the true signal). On the contrary, although 
the spaces \eqref{swc} and $\Spann\{x_n(t),\, t\in [0,1] \}_{n=1}^{405}$  
are `theoretically' complementary, since the 
construction of the corresponding oblique projector is very ill posed, 
the projection fails to correctly separate the signals.

\begin{figure}[!ht]
\label{f2}
\begin{center}
\includegraphics[width=7.5cm]{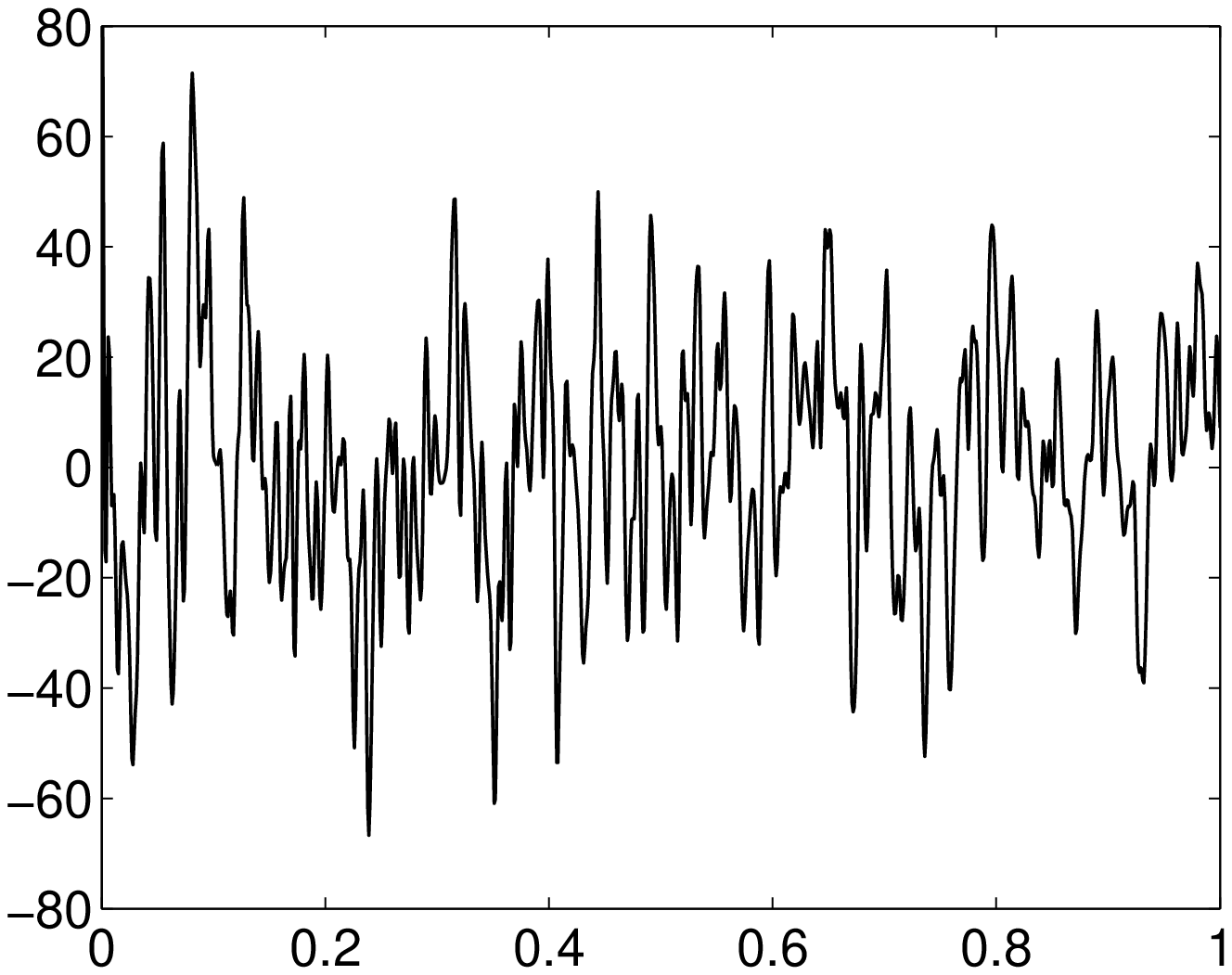}
\includegraphics[width=7.5cm]{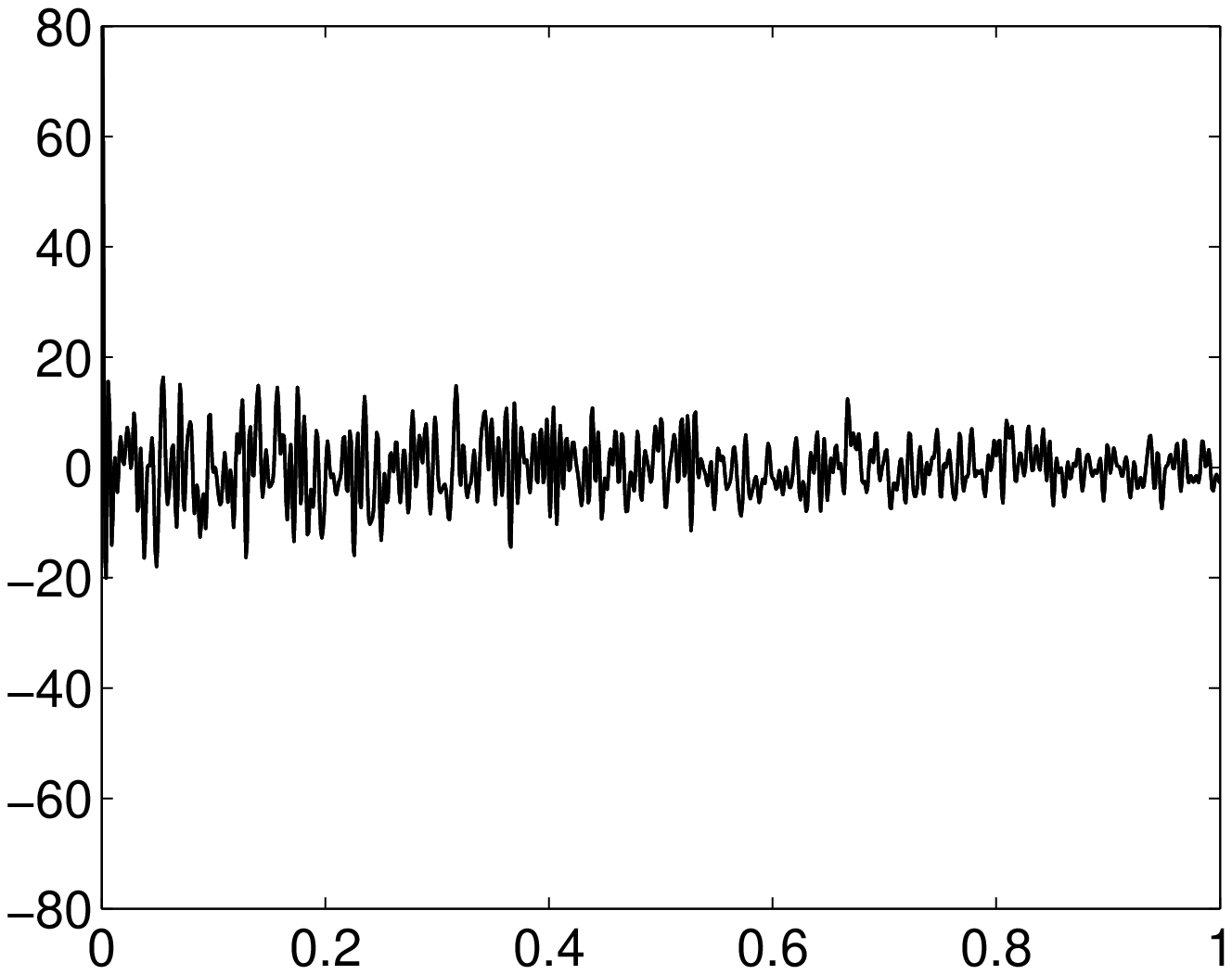}\\
\end{center}
\caption{The left graph depicts the register of
the motion of a system 
consisting of 100 damped harmonic oscillators, whose 
frequencies are integer numbers randomly taken from the interval
$[1,405]$, corrupted by 200 pulses randomly taken from 
the subspace $\SWC$ given in \eqref{swc}. 
The graph on the right depicts the result of 
filtering the signal of the previous graph
by the strategy described in Section 5. The result coincides with the 
true motion of the simulated system.} 
\end{figure}

\subsection{Application to the separation of a spectrum 
from blackbody radiation background}
We consider an hypothetical situation where the 
background is assumed  to be produced by a linear 
combination of up to five blackbodies (e.g. stars) at temperatures of 
$T_1=3000$K, $T_2=3500 $K , $T_3=4000 $K, $T_4=4500 $K , and $ T_5=5000 $K.
Hence, in this case the 
subspace $\SWC$ is defined as
$\SWC= \Spann\{y_i\}_{i=1}^5$, where $y_i$ are functions 
of the wavelength $\lambda$ as given by
$$y_i(\lambda)= \frac{C_1}{\lambda^5(e^{\frac{C_2}{\lambda T_i}}-1)}, \quad  
C_1=3.7419 \times 10^{-6} {\rm{erg\, cm^2\, s^{-1}}}, 
C_2=1.4288 {\rm{cm\, K}}.$$
Any linear  combination  of these functions is an acceptable 
background. In our numerical experiment the background, $y(\lambda)$, is generated as 
$y(\lambda)= \sum_{i=1}^5 y_i(\lambda)$.

We simulate a spectrum, on a region of $\lambda$  ranging from 
zero to $3 \mu \rm{m}$, by considering that it belongs to 
the cardinal cubic spline space on the interval  
$[0, 3\mu \rm{m}]$ with separation $b=2^{-4} \mu \rm{m}$ 
between consecutive knots.
Such a space can be spanned by a B-spline basis  
arising by translating a prototype B-spline \cite{Sch81,AR05}.
Different spectra  are simulated by randomly drawing $K=70$ functions 
$\{v_{\ell_i}\}_{i=1}^{70}$, from the basis consisting of $483$
functions, to 
generate the decomposition $\sum_{i=1}^{70} c_i v_{\ell_i}$, with 
random coefficients $c_i \in [0,1],\,i=1\ldots,70$. A particular 
realization is depicted in the bottom graph of Fig.~2. The 
simulated available signal is obtained by adding the background $y(\lambda)$ 
given above to the spectrum, and perturbing each data point with a normal 
distributed error of variance corresponding to a percentage of each data value. 
Let us remark that, although the background is not 
`exactly' in the cardinal spline space of the spectrum, it 
has a very good representation in such a space.  Hence,  
 the calculation of an oblique projector onto the 
given spline space and along the space of the background is
 expected to be very badly conditioned. Indeed, 
for very small errors (variance of $10^{-6}\%$ of each data value) 
the separation of the spectrum from the background is not 
possible by an oblique projection onto the whole spline space. 
However, in a simulation of 100 different spectra, each of which consisting 
of 70 randomly taken B-spline functions,
the separation was successful in all the 
cases by applying the proposed approach. Thus, a more realistic situation 
was simulated by increasing the variance of the error up to $1\%$ of each data value. Also in 
this case the spectrum recovery was a complete success. 
In order to make evident 
the errors' effect, the variance was increased up to $5\%$ of each data value.
One of the simulations is plotted in the top graph 
of Fig.~3. The middle graph of the same figure 
shows both, the theoretical 
`hidden' spectrum and the one recovered by the proposed approach. 
This graph is meant to illustrate  a `typical result', as in a run of the
100 simulations  described above the reconstruction of the corresponding 
spectra was of similar quality.  
The visualization of the approximation quality is made clearer  in
the bottom graph of Fig.~3, 
where a portion of the previous graph
(corresponding to the interval $[0.5,1]$) is plotted. 
Here we can see that, as expected, the 
approximation (broken line) fails to reproduce the peaks of low intensity 
(the one at $0.65\mu \rm{m}$).
This is of course understandable by comparing the intensity of the 
spectrum with the intensity of the data on the $[0.5,1]$ interval. In this 
region, a variance of $5\%$ of intensity of each data point entails an 
uncertainty of more than one unit in the corresponding scale of 
intensity. Thus, one cannot  
expect to correctly spot peaks of intensity of the same order as the errors. 
On the other hand, 
some spurious peaks which are not in the true spectrum may also appear
(note the small peak of negative intensity). 
However, on the whole we can confidently assert that the recovery of the 
simulated spectra is satisfactory even for significant error level. 
It is pertinent to point out that if one wished to 
avoid negative intensities one could penalize the selection of 
measurements leading to such negative values. In our framework  
this is implementable in a straightforward manner. 
For instance, in the forward selection procedure,  
at iteration $k+1$ we select 
the index $\ell_{k+1}$ satisfying \eqref{oomp} and 
it follows from \eqref{ato} and \eqref{recf}  that 
the spectral intensity vector at this step is given as
\be
|f_{\SV_{k+1}}\ra= |f_{\SV_{k}}\ra - \hat{E}_{\SV_{k}} |v_{\ell_{k+1}}\ra
\frac{\la \gamma_{k+1}| f \ra}{||  |\gamma_{k+1}\ra ||^2} + |v_{\ell_{k+1}}\ra 
\frac{\la \gamma_{k+1}| f \ra}{||  |\gamma_{k+1}\ra ||^2},
\ee
where $|f_{\SV_{k}}\ra$ is the spectral intensity obtained in the 
previous iteration,  $\hat{E}_{\SV_{k}}$ the oblique 
projector onto the previously selected subspace, $|f\ra$ the 
observed data vector and $|\gamma_{k+1}\ra$ (constructed as in  \eqref{wkkp}) 
is to be  determined in the selection process of the index $\ell_{k+1}$ according to 
the prescription of Proposition 1. Thus, restrictions on $|f_{\SV_{k+1}}\ra$ 
can be incorporated by disregarding the selected indices 
yielding unacceptable values of $|f_{\SV_{k+1}}\ra$. 
In the example we are  discussing here, 
one can avoid negative values of intensity  by 
disregarding those indices which satisfying \eqref{oomp} do not 
fulfill the condition $\la \lambda |f_{\SV_{k+1}}\ra = f_{\SV_{k+1}}(\lambda)
\ge  0$ for the values of $\lambda$ being considered.  
With the incorporation of this constrain the small negative pick 
in the middle graph of Fig.~3 disappears and 
the whole approximation improves. However, in some other realizations 
of the experiment when introducing the possibility 
constrains some small spurious picks of positive intensity 
appear, yielding on the whole an approximation of quality
comparable with the unconstrained one.
The spurious small peaks (or the absence or small peaks present in the 
true spectrum) are consequence of the considerable uncertainty in the 
data. The fidelity of the approximation with the true 
spectrum (in all the realizations of the experiment)
is improved by reducing the error of the data. 
\section{Conclusion} \label{sec7}
The construction of measurement vectors specially  
designed for separating signal components produced by phenomena of 
different nature was discussed.  Assuming that 
the subspaces hosting the signal components are
given, the required 
measurement vectors  should 
yield an oblique projection along one of the  
subspaces and onto the other. 
Considerations were restricted to 
those cases for which such subspaces 
are theoretically complementary, yet very close to each other, 
so that the construction of the measurement 
vectors for the whole space renders  an ill posed problem.
A recursive  strategy for finding  the right subspace 
to achieve the desired signal separation was then
discussed. 
By recourse to  numerical simulations  it was illustrated 
that, provided that the signal is sparse
in a spanning set of the signal subspace, 
the required signal splitting 
may be achieved by means of adaptive greedy techniques capable of 
searching for the required subspace while maintaining stability 
in the calculations.  When tested in situations 
involving significant level of errors  the proposed technique produced
satisfactory results. Therefore,  
we are led to conclude that the framework
for measurement design advanced here
should be of assistance to a variety of applications 
where the discrimination of phenomena of different nature is required. 
\subsection*{Acknowledgements}
Support from EPSRC (EP$/$D062632$/$1) is acknowledged.

\newpage
\begin{center}
{\bf{Figures}}
\end{center}
\begin{figure*}[ht!]
\begin{center}
\includegraphics[width=10cm]{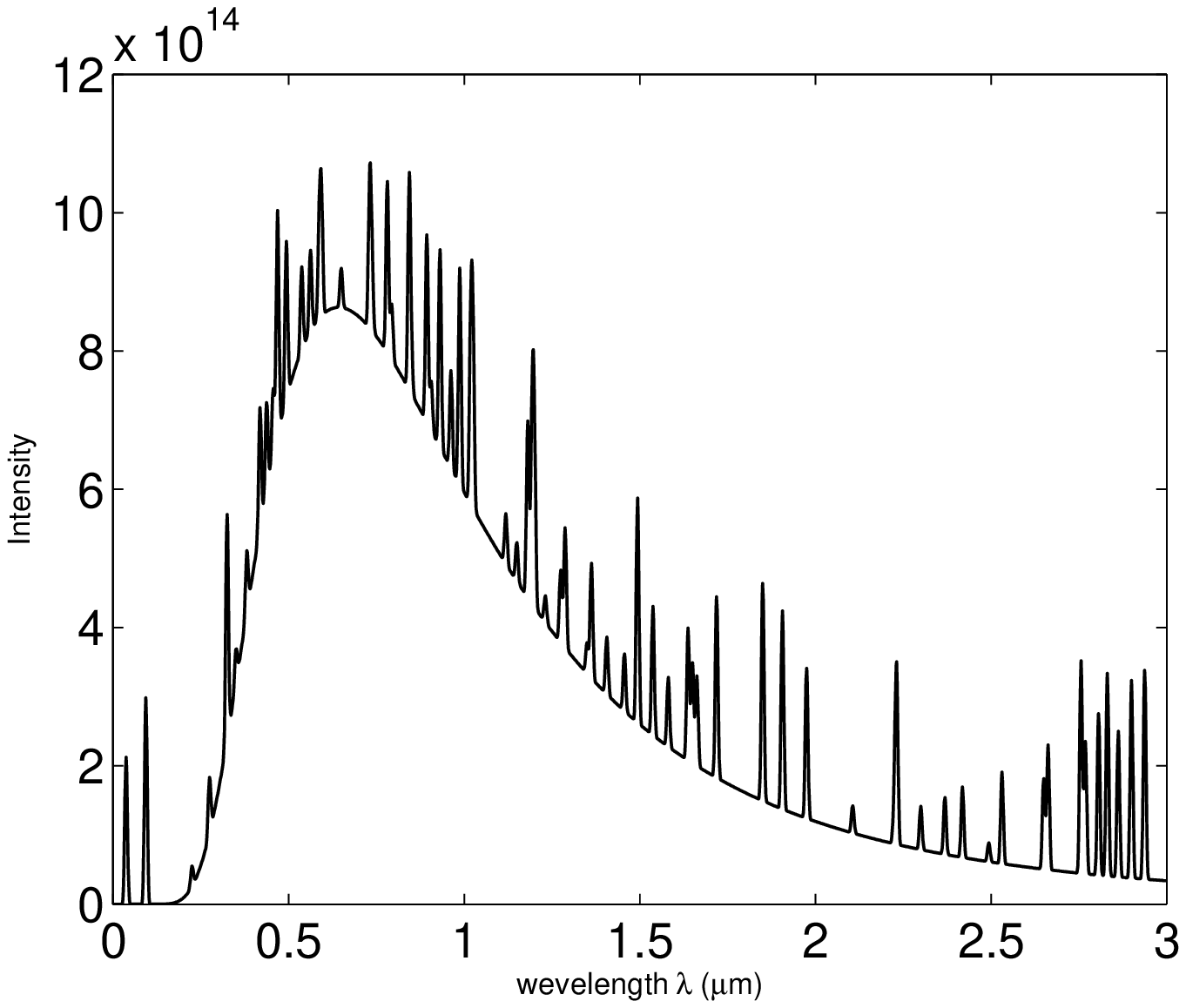}\\
\includegraphics[width=10cm]{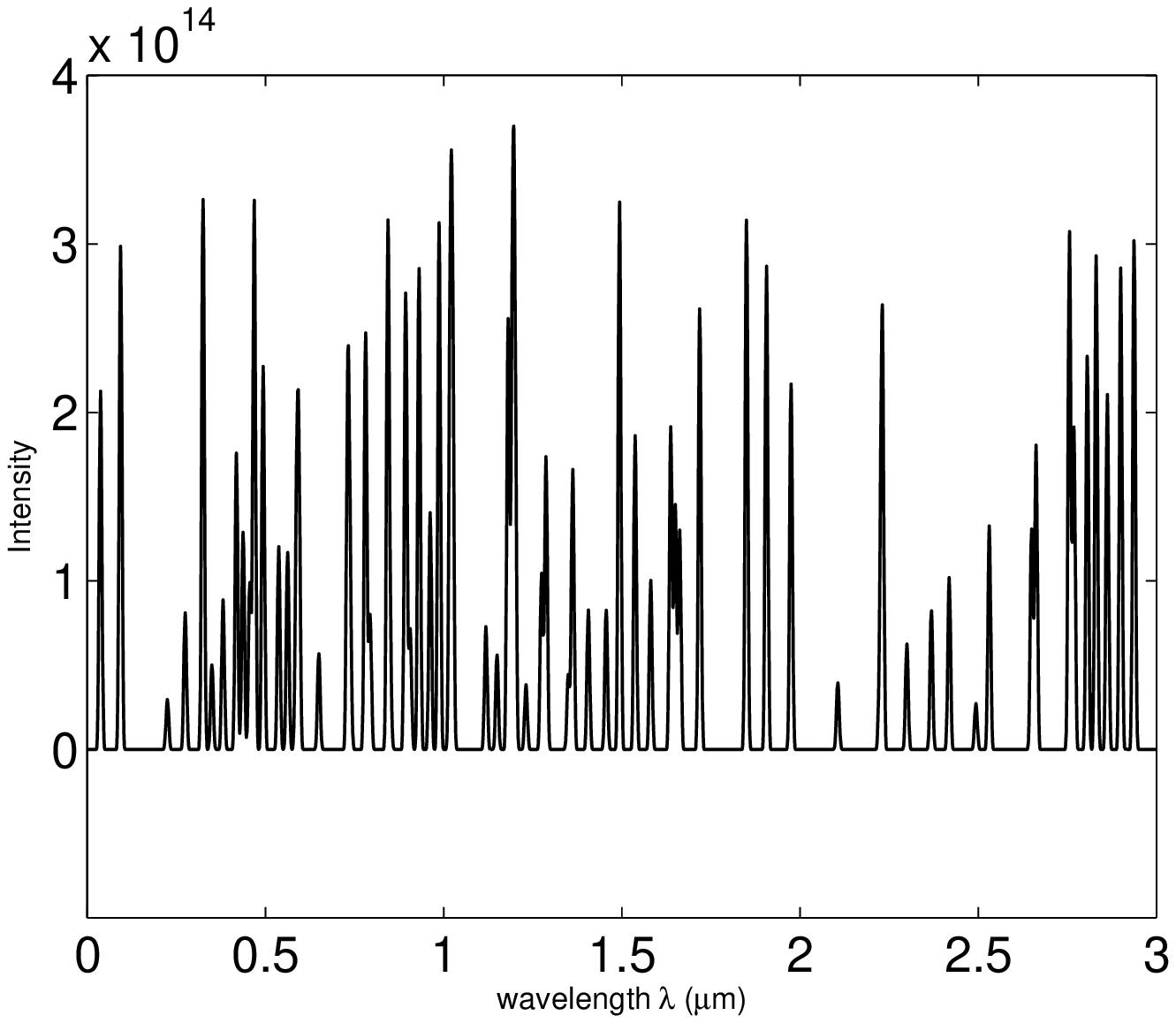}
\end{center}
\caption{The top graph depicts theoretical data produced
by the superposition of
the spectrum shown in the bottom graph and black body
radiation background.}
\end{figure*}
\newpage
\vspace{-1cm}
\begin{figure*}[ht!]
\begin{center}
\includegraphics[width=10cm]{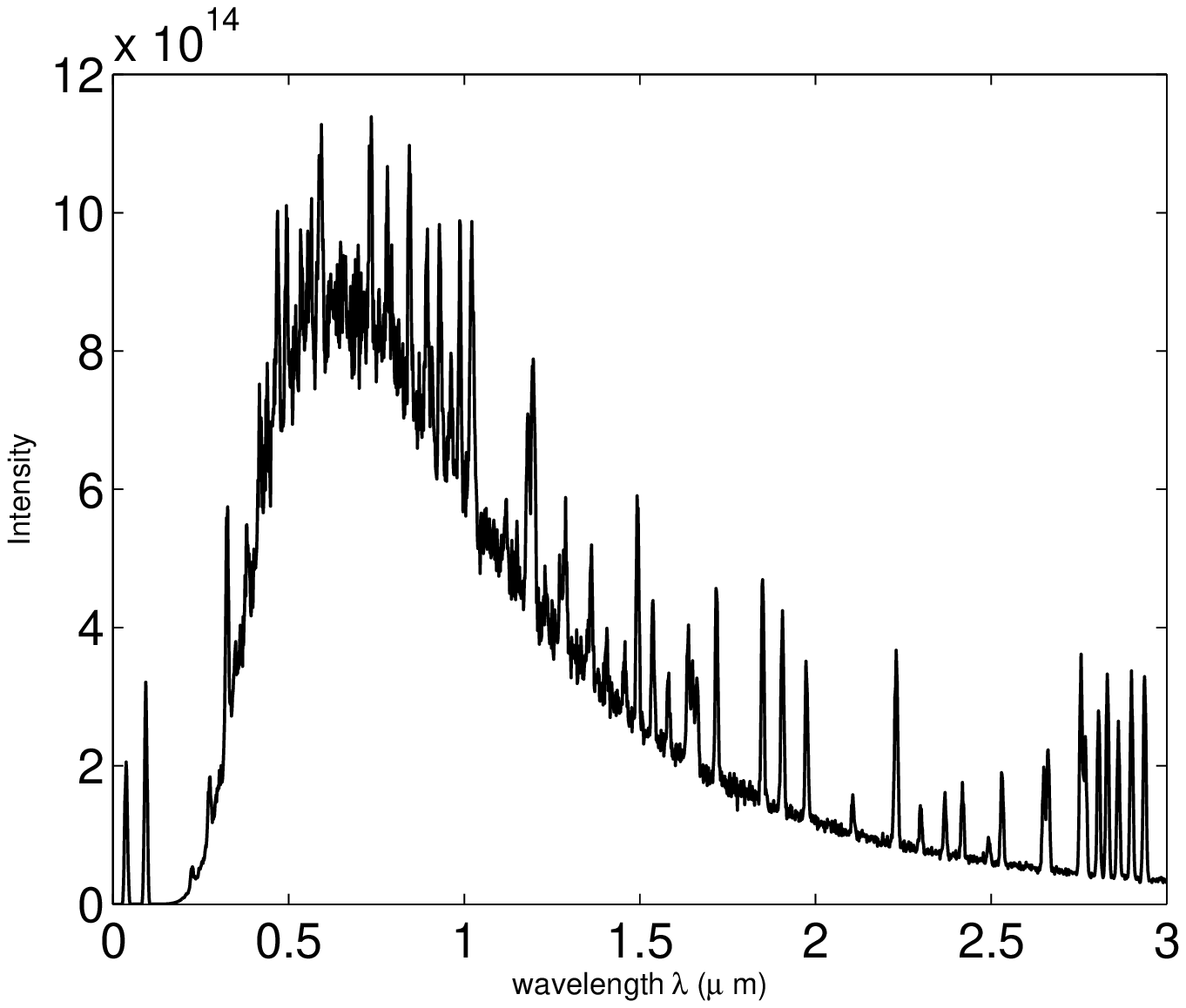}\\
\includegraphics[width=10cm]{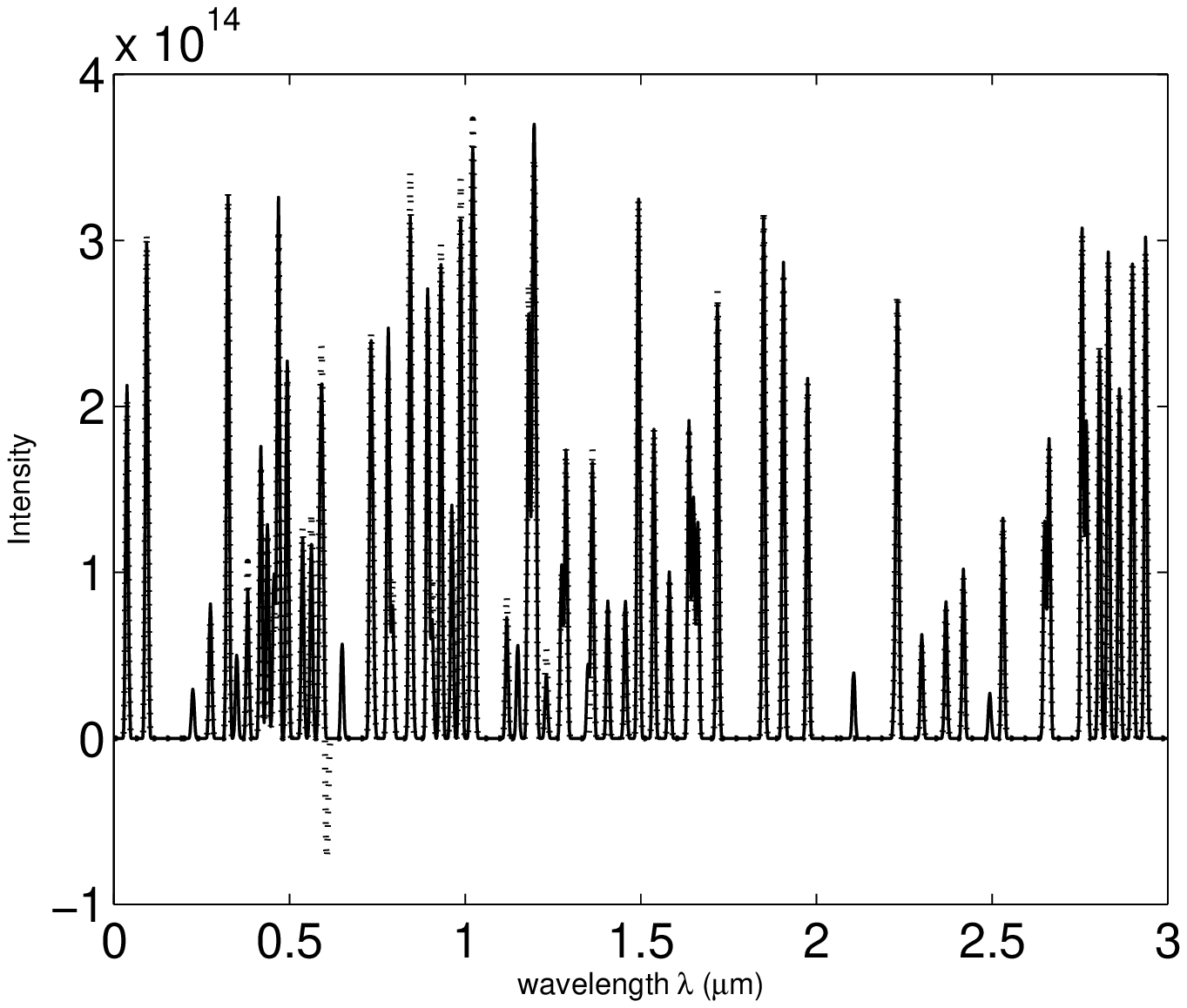}\\
\includegraphics[width=10cm]{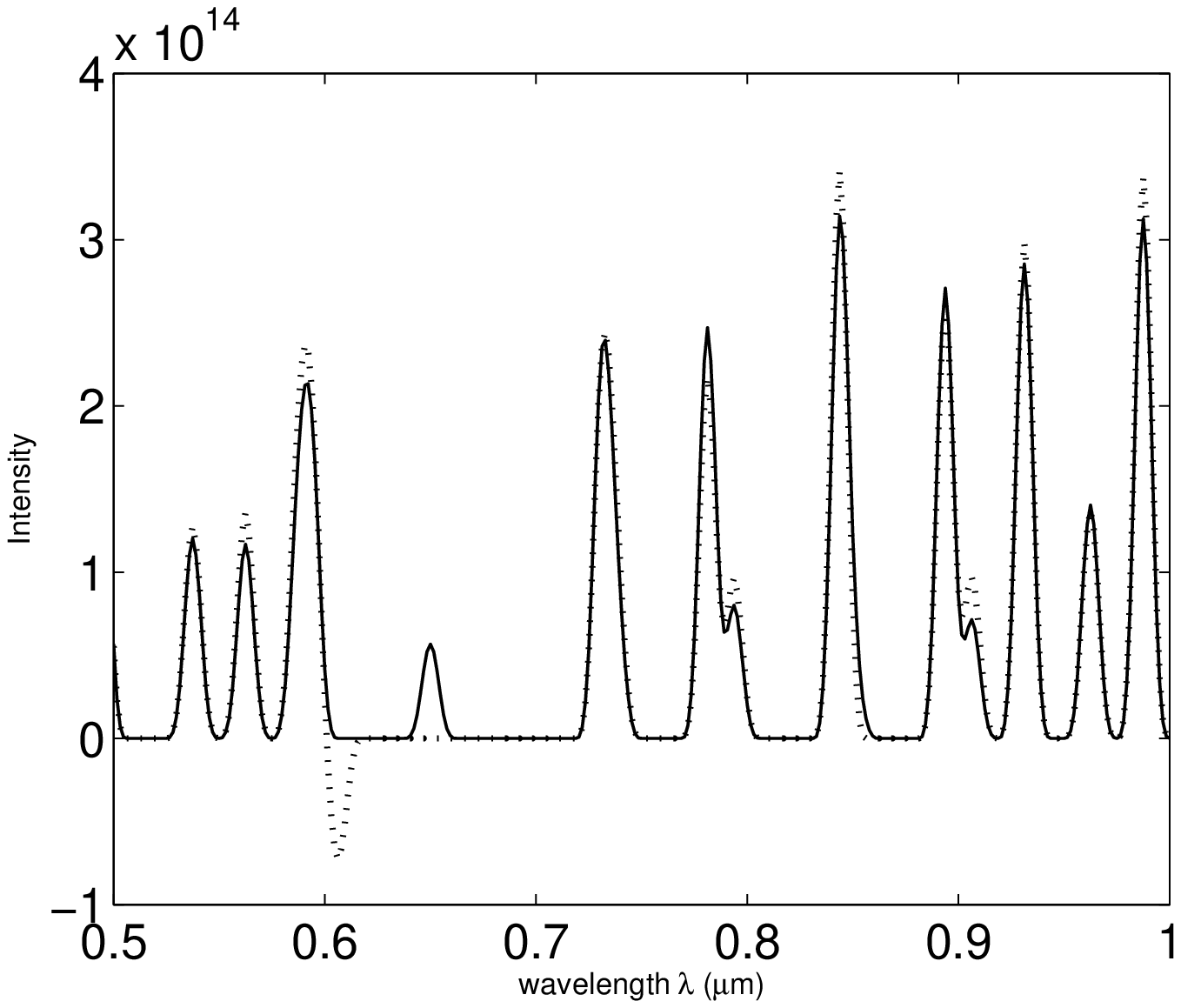}
\end{center}
\end{figure*}
\newpage
\begin{center}
{\bf{Figure Caption}}
\end{center}
Figure 3. The top graph has the same description as the top graph of
Fig.~2, but the data are perturbed with zero mean normal
distributed errors of variance corresponding to $5\%$ of
each data value. The
continuous line in the middle graph represents the theoretical spectrum and
the dotted line the spectrum obtained by the proposed approach from the
data of the top graph. The bottom
graph magnifies the region [0.5 1] in the previous graph.\\

\newpage
\bibliographystyle{elsart-num}
\bibliography{revbib}
\end{document}